\newtheorem{theorem}{Theorem}
\newtheorem{lemma}{Lemma}
\newtheorem{remark}{Remark}
\newtheorem{proposition}{Proposition}
\newcommand{\wt}{\widetilde}
\renewcommand{\epsilon}{\varepsilon}
\renewcommand{\phi}{\varphi}
\begin{document}

\title{Factorizations of the Fibonacci Infinite Word}

\author{Gabriele Fici}
\address{Dipartimento di Matematica e Informatica\\
                Universit\`a di Palermo\\
                Palermo\\
                Italy}
\email{Gabriele.Fici@unipa.it}
\thanks{Published on \textit{Journal of Integer Sequences}, Vol. 18 (2015), Article 15.9.3.}

\begin{abstract}
The aim of this note is to survey the factorizations of the Fibonacci infinite word that make use of the Fibonacci words and other related words, and to show that all these factorizations can be easily derived in sequence starting from elementary properties of the Fibonacci numbers. 
\end{abstract}

\maketitle

\smallskip 
\noindent
\textbf{Keywords.}
Fibonacci word; Zeckendorf representation; Lyndon factorization; Lempel-Ziv factorization; Crochemore factorization.

\section{Preliminaries}

The well-known sequence of Fibonacci numbers (sequence A000045 in
the {\it On-Line Encyclopedia of Integer Sequences}) is defined by $F_{1}=1$,  $F_{2}=1$ and for every $n>2$, $F_{n}=F_{n-1}+F_{n-2}$.
The first few values of the sequence $F_{n}$ are reported in Table \ref{tab:Fibonumbers} for reference.

\begin{table}[H]
\begin{center}
\scalebox{0.8}{
\begin{tabular}{cccccccccccccccccccccc}
$n$ & 1 & 2& 3& 4& 5& 6& 7& 8& 9& 10& 11 & 12 & 13 & 14 & 15 & 16 & 17 & 18 & 19 & 20 \\
\hline 
$F_{n}$ & 1 & 1 & 2& 3& 5& 8& 13& 21& 34& 55& 89& 144 & 233 & 377 &610 & 987 & 1597 & 2584 & 4181 & 6765
\end{tabular}
}
\caption{\label{tab:Fibonumbers}The first few values of the sequence of Fibonacci numbers.}
\end{center}
\end{table}

A basic property of Fibonacci numbers (that can be easily proved by induction) is that  $1$ plus the sum of the first $n$ Fibonacci numbers is equal to the $(n+2)$-th Fibonacci number:

\begin{equation}\label{eq:Fib}
 1+\sum_{i=1}^nF_i=F_{n+2}.
\end{equation}

We recall here a famous result, usually attributed to Zeckendorf
\cite{Zeck}, but published earlier by Lekkerkerker \cite{Lek} and which,
in fact, is a special case of an older and more general result due to
Ostrowski \cite{Ostr}.  It permits us to use Fibonacci numbers as a
basis for representing integers:

\begin{theorem}\label{theor:zeck}
 Every positive integer can be expressed uniquely as the sum of one or more distinct non-consecutive Fibonacci numbers $F_n$, $n>1$.
\end{theorem}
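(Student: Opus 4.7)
The plan is to prove existence and uniqueness separately, both by strong induction on $n$, and to lean heavily on the identity \eqref{eq:Fib} already established in the excerpt.

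For existence, I would proceed greedily. Given $n \geq 1$, let $k \geq 2$ be the largest index with $F_k \leq n$, and set $m = n - F_k$. If $m = 0$ we are done. Otherwise, by maximality, $n < F_{k+1}$, so $m < F_{k+1} - F_k = F_{k-1}$. Applying the induction hypothesis to $m$ produces a Zeckendorf representation of $m$ whose largest term $F_j$ satisfies $F_j \leq m < F_{k-1}$, hence $j \leq k-2$. Appending $F_k$ therefore yields a representation of $n$ in which consecutive indices differ by at least $2$, completing this half.

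For uniqueness, the key auxiliary fact is the following bound: if $F_{i_1} > F_{i_2} > \cdots > F_{i_r}$ are distinct Fibonacci numbers with $i_r > 1$ and $i_{s} - i_{s+1} \geq 2$ for every $s$, then
\[
\sum_{s=1}^{r} F_{i_s} < F_{i_1 + 1}.
\]
The largest possible such sum with top index $i_1$ is obtained by taking indices $i_1, i_1 - 2, i_1 - 4, \ldots$ down to $2$ or $3$. Using \eqref{eq:Fib} one checks that this maximal sum telescopes to $F_{i_1 + 1} - 1$, giving the strict inequality.

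Granting this bound, uniqueness follows quickly. Suppose $n$ has two Zeckendorf representations with largest terms $F_k$ and $F_{k'}$ respectively. If $k' < k$, the second representation is at most $F_{k'+1} - 1 \leq F_k - 1 < n$, a contradiction; so $k = k'$. Subtracting the common term $F_k$ yields two Zeckendorf representations of $n - F_k < F_k$, which coincide by the induction hypothesis. I expect the proof of the bound lemma to be the main obstacle, since it requires a small case split on the parity of $i_1$ (whether the greedy sequence bottoms out at $F_2$ or $F_3$) and a careful application of \eqref{eq:Fib}; once it is in hand, both existence and uniqueness reduce to clean inductive steps.
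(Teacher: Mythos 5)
The paper does not actually prove this statement: Theorem \ref{theor:zeck} is recalled as a classical result and attributed to Zeckendorf, Lekkerkerker, and Ostrowski, so there is no in-paper argument to compare against. Your proof is the standard greedy/induction argument and is essentially correct. The existence half is sound; the one point worth making explicit is that $F_j < F_{k-1}$ forces $j \le k-2$ because the Fibonacci numbers are strictly increasing for indices $\ge 2$, which is exactly where the exclusion of $F_1$ matters. The uniqueness half via the bound $\sum_{s} F_{i_s} < F_{i_1+1}$ is the classical route and works. Two small touches are needed on the bound lemma. First, the identity $F_{i_1}+F_{i_1-2}+\cdots = F_{i_1+1}-1$ does not follow from \eqref{eq:Fib}, which sums \emph{consecutive} Fibonacci numbers; it telescopes instead from the recurrence written as $F_i = F_{i+1}-F_{i-1}$ (summing over $i = i_1, i_1-2, \ldots$ down to $2$ or $3$). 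Second, the assertion that the alternating index set maximizes the sum is true but is itself a claim requiring a line of justification; you can sidestep it entirely by proving the bound directly by strong induction on $i_1$: the tail $F_{i_2}+\cdots+F_{i_r}$ has top index at most $i_1-2$, so by induction it is less than $F_{i_1-1}$, whence the whole sum is less than $F_{i_1}+F_{i_1-1}=F_{i_1+1}$. With those adjustments the argument is complete.
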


For example, $17=13+3+1=F_7+F_4+F_2$,
and there is no other way to write $17$ as the sum of non-consecutive Fibonacci numbers (assuming the convention that $F_1$ is not used in the representation). 
Thus, one can represent natural numbers as strings of $0$-$1$ bits, where the $i$-th bit (from the right) encodes the presence/absence of the $(i+1)$-th Fibonacci number in the representation given by Theorem \ref{theor:zeck}. So for example the number $17$ is represented by $100101$. We call this representation of natural numbers the \emph{Zeckendorf representation}.

The first few natural numbers and their Zeckendorf representations are displayed in Table \ref{tab:zeck}, where we padded to the left with $0$s in order to have strings of the same length. Note that with $6$ bits one can represent the first $21$ natural numbers. In fact, for every $n>0$, there are exactly $F_n$ integers whose leftmost $1$ in the Zeckendorf representation is in position $n$ (starting from the right). From $(\ref{eq:Fib})$, we derive that one needs $n$ bits to represent the first $F_{n+2}$ natural numbers. 

The strings of length $n$ forming the Zeckendorf representations of the first $F_{n+2}$ natural numbers are precisely all the $0$-$1$ strings of length $n$ not containing two consecutive $1$s. These strings are in lexicographic order if the natural numbers are in increasing order from $0$ to $F_{n+2}-1$.

\begin{table}[ht]
\centering 
\begin{raggedright}
\begin{tabular}{l *{1}{@{\hspace{2mm}}c@{\hspace{6mm}}} l *{1}{@{\hspace{2mm}}c@{\hspace{6mm}}} l *{1}{@{\hspace{2mm}}c@{\hspace{6mm}}} l *{1}{@{\hspace{2mm}}c}}
Zeck.\  & decimal & Zeck.\   & decimal & Zeck.\   & decimal \\
\hline \rule[-2pt]{0pt}{3pt}\\
$000000$ & 0 & $010000$ & 8 & $100100$ & 16  \\
$000001$ & 1 & $010001$ & 9 & $100101$ & 17  \\
$000010$ & 2 & $010010$ & 10 & $101000$ & 18  \\
$000100$ & 3 & $010100$ & 11 & $101001$ & 19  \\
$000101$ & 4 & $010101$ & 12 & $101010$ & 20 \\
$001000$ & 5 & $100000$ & 13 &  \\
$001001$ & 6 & $100001$ & 14 &  \\
$001010$ & 7 & $100010$ & 15 & \\
\hline \vspace{4mm}
\end{tabular}
\end{raggedright}
\caption{\label{tab:zeck}The Zeckendorf representations of the first few natural numbers coded with 6 bits.}
\end{table}

Let us define $f(n)$, for every $n\geq 0$, as the rightmost digit of the Zeckendorf representation of $n$. For every $n>1$ we define the \emph{$n$-th Fibonacci word} as the string \[f_n=f(0)f(1)\cdots f(F_n-1)\] of length $|f_n|=F_n$. By convention, we set $f_1=1.$ 
The first few Fibonacci words are shown in Table \ref{tab:Fibowords01}.

\begin{table}[ht]
\centering  
\begin{equation*}
\begin{split}
  f_{1}  &= 1 \\
  f_{2}  &= 0 \\
  f_{3}  &= 01 \\
  f_{4}  &= 010 \\
  f_{5}  &= 01001 \\
  f_{6}  &= 01001010 \\
  f_{7}  &= 0100101001001 \\
  f_{8}  &= 010010100100101001010 \\
  f_{9}  &= 0100101001001010010100100101001001 \\
 \end{split}
 \end{equation*}
\caption{\label{tab:Fibowords01}The first few Fibonacci words.}
\end{table}

We also define the \emph{Fibonacci infinite word} $f$ as the limit of $f_n$ as $n$ goes to infinity. That is, $f$ is the infinite word whose $n$-th letter is the ``parity'' of the Zeckendorf representation of $n$: \[f=f(0)f(1)f(2)f(3)\cdots=0100101001001010010\cdots\]

In the Zeckendorf representation of an integer, when the $n$-th digit from the right is a $1$, the $(n-1)$-th digit from the right is a $0$. Hence, the rightmost $n-2$ digits of the Zeckendorf representations of the natural numbers from $F_{n+1}$ to $F_{n+2}-1$ are the same rightmost $n-2$ digits of the Zeckendorf representations  of the first $F_{n}$ natural numbers. For example, the $2$ rightmost digits of the Zeckendorf representations of $5$, $6$ and $7$ are, respectively, $00$, $01$, $10$, as well as the $2$ rightmost digits of the Zeckendorf representations of the $0$, $1$ and $2$. We deduce that for every $n>2$, one has 
\begin{equation}\label{eq:rec}
 f_n=f_{n-1}f_{n-2}.
\end{equation}
For more details on Fibonacci words the reader can see, for instance, \cite{Ber}.

Recall that a \emph{factorization} of an infinite word $w$ is a sequence $(x_n)_{n\geq 1}$ of finite words such that $w$ can be expressed as the concatenation of the elements of the sequence, i.e., $w=\prod_{n\geq 1} x_{n}$.

In general, exhibiting a factorization $(x_n)_{n\geq 1}$ of an infinite word $w$ can be useful to better understand the combinatorics of $w$, provided the sequence $(x_n)_{n\geq 1}$ has non-trivial combinatorial properties---for example, all the words in the sequence are palindromes, squares, or prefixes of $w$. 

Another point of view consists in defining a factorization by some general rule that can be applied to any infinite word. The sequence $(x_n)_{n\geq 1}$ is therefore determined by the particular instance of the infinite word $w$ (as is the case, for example, in the Lempel-Ziv factorization or in the Lyndon factorization, that we will see below). In this case, the word $w$ can have particular properties that make it a limit example for that particular factorization. 

In next sections, we will show a number of factorizations of the Fibonacci infinite word that make use of the Fibonacci finite words and other related words. These factorizations have been introduced over the time in different papers, and we think it can be useful to collect them all together for reference. We also add some (at least to the best of our knowledge) novel factorizations. Moreover, we present these factorizations in an order that allows us to provide a short and elementary proof for each of them, despite the original proofs being sometimes more involved or more technical.

\section{Fibonacci words and co-Fibonacci words}

The first factorization  of the Fibonacci infinite word we exhibit is the following.

\begin{proposition}
The Fibonacci infinite word can be obtained by concatenating $0$ and the Fibonacci words:
\begin{eqnarray}\label{fibo}
 f &=& 0\prod_{n\geq 1} f_{n}\\
 &=& 0 \cdot 1 \cdot 0 \cdot 01 \cdot 010 \cdot 01001 \cdot 01001010 \cdots \notag
\end{eqnarray}
\end{proposition}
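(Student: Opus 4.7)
The plan is to prove by induction that the partial products are Fibonacci words. Specifically, I would show that for every $n \geq 1$,
\begin{equation*}
0 \cdot f_{1} \cdot f_{2} \cdots f_{n} = f_{n+2}.
\end{equation*}
This implies the proposition, because $f = \lim_{n \to \infty} f_{n+2}$ by definition of the Fibonacci infinite word, so taking the limit on both sides yields $f = 0 \prod_{n \geq 1} f_{n}$.

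For the induction, the base case $n=1$ is the direct verification $0 \cdot f_{1} = 01 = f_{3}$. For the inductive step, assuming $0 \cdot f_{1} \cdots f_{n-1} = f_{n+1}$, concatenation gives
\begin{equation*}
0 \cdot f_{1} \cdots f_{n-1} \cdot f_{n} = f_{n+1} \cdot f_{n} = f_{n+2},
\end{equation*}
where the last equality is exactly the recurrence (\ref{eq:rec}). As a consistency check on lengths, the left-hand side has length $1 + \sum_{i=1}^{n} F_{i}$, which equals $F_{n+2} = |f_{n+2}|$ by (\ref{eq:Fib}); this shows nothing goes wrong at the boundary.

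There is essentially no obstacle here: the whole statement is a one-line induction on $n$ using the recurrence $f_{n} = f_{n-1}f_{n-2}$ established in the preliminaries. The only thing that deserves a remark is that the ``extra'' leading $0$ exactly compensates for the $+1$ in the identity (\ref{eq:Fib}), which is precisely why the factorization begins with $0$ before the Fibonacci words start.
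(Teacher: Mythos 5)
Your proof is correct and rests on the same two facts the paper uses, namely the recurrence $f_{n+2}=f_{n+1}f_n$ from (\ref{eq:rec}) and the length identity (\ref{eq:Fib}); the paper phrases it as checking that $f_n$ occurs in $f$ at position $F_{n+1}$, while you phrase it as an induction showing the partial products equal $f_{n+2}$, which is a minor and equally valid repackaging.
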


\begin{proof}
 Since for every $i\geq 1$, $|f_{i}|=F_{i}$, it is sufficient to prove that, for every $n\geq 1$, $f_{n}$ occurs in $f$ starting at position $1+\sum_{i=1}^{n-1}F_{i}=F_{n+1}$. From $(\ref{eq:rec})$, we have $f_{n+2}=f_{n+1}f_n$, so that $f_{n}$ has an occurrence in $f$ starting at position $|f_{n+1}|=F_{n+1}$.
\end{proof}

Let us consider the sequence $p_{n}$ of the palindromic prefixes of $f$, also called \emph{central words}. The first few values of the sequence $p_n$ are displayed in Table \ref{tab:central}, where $\epsilon$ denotes the empty word, i.e., the word of length $0$.

\begin{table}[ht]
\centering  
\begin{equation*}
\begin{split}
 p_{3} &= \epsilon\\
 p_{4} &= 0\\
 p_{5} &= 010\\
 p_{6} &= 010010\\
 p_{7} &= 01001010010\\
 p_{8} &= 0100101001001010010\\
 p_{9} &= 01001010010010100101001001010010\\
 \end{split}
 \end{equation*}
\caption{\label{tab:central}The first few central words.}
\end{table}

As it is well-known, for every $n\geq 3$, $p_{n}$ is obtained from $f_{n}$ by removing the last two letters. More precisely, we have for every $n\geq 1$, 
\begin{equation}\label{central}
f_{2n+1}=p_{2n+1}01, \hspace{6mm} f_{2n+2}=p_{2n+2}10.
\end{equation}

The fundamental property of the central words is the following:

\begin{lemma}\label{lem:central}
For every $n\geq 2$ one has
\[
 p_{2n+1}=p_{2n-1}01p_{2n}=p_{2n}10p_{2n+1}, \hspace{8mm}  p_{2n+2}=p_{2n}10p_{2n+1}=p_{2n+1}01p_{2n}.
\]
\end{lemma}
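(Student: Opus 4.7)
The plan is to derive the four identities by combining the recurrence $f_n=f_{n-1}f_{n-2}$ of \eqref{eq:rec} with the suffix identity \eqref{central}, and then to exploit the fact that every $p_k$ is palindromic by its very definition as a palindromic prefix of $f$.

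First I would substitute \eqref{central} into \eqref{eq:rec}. Applying the recurrence to $f_{2n+1}=f_{2n}f_{2n-1}$ and expanding each Fibonacci word via its central-word suffix gives
\[
p_{2n+1}\cdot 01 \;=\; (p_{2n}\cdot 10)(p_{2n-1}\cdot 01),
\]
and cancelling the common trailing $01$ produces $p_{2n+1}=p_{2n}\,10\,p_{2n-1}$. The analogous manipulation of $f_{2n+2}=f_{2n+1}f_{2n}$ produces $p_{2n+2}=p_{2n+1}\,01\,p_{2n}$. These account for one of the two decompositions listed in each line of the lemma.

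The remaining decompositions come from reversal. Since $p_k^R=p_k$ for every $k$, reversing $p_{2n+1}=p_{2n}\,10\,p_{2n-1}$ gives
\[
p_{2n+1} \;=\; (p_{2n-1})^R\cdot 01 \cdot (p_{2n})^R \;=\; p_{2n-1}\,01\,p_{2n},
\]
and the same reversal trick applied to $p_{2n+2}=p_{2n+1}\,01\,p_{2n}$ yields $p_{2n+2}=p_{2n}\,10\,p_{2n+1}$. Together these account for all four equalities in the statement.

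I do not expect a serious obstacle: every step is either a direct substitution or a word reversal, both of which are essentially one-line. The only conceptual point is noticing that the palindromicity of the $p_k$, which comes for free from the definition, automatically supplies the second factorization in each pair without requiring a separate induction or any additional combinatorial input.
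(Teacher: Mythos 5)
Your proof is correct and follows essentially the same route the paper intends: substitute \eqref{central} into the recurrence \eqref{eq:rec}, cancel the common two-letter suffix, and obtain the mirror identities from the palindromicity of the $p_k$. One remark: what you derive, $p_{2n+1}=p_{2n}\,10\,p_{2n-1}$, is the correct identity; the printed equality $p_{2n+1}=p_{2n}10p_{2n+1}$ is a typo (impossible already on length grounds), and your corrected form is exactly what Lemma~\ref{lem:cofib} uses later.
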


\begin{proof}
 Follows immediately from $(\ref{eq:rec})$ and $(\ref{central})$.
\end{proof}

\begin{remark}
It is easy to see from $(\ref{eq:rec})$ that for every $n\geq 4$, one has $f_{n}=f_{n-2}f_{n-3}f_{n-2}$.
We have therefore from $(\ref{fibo})$:
\begin{eqnarray}\label{multisingular}
 f &=& 01001 \prod_{n\geq 2}f_{n}f_{n-1}f_{n}\\
 &=& 01001 \cdot (0\cdot 1\cdot 0)(01\cdot 0\cdot 01)(010\cdot 01\cdot 010)(01001\cdot 010 \cdot 01001)  \cdots \notag
\end{eqnarray}
Analogously, since $1=f_3$, we can write
\begin{eqnarray}\label{multisingular2}
f &=& 0100 \prod_{n\geq 2}f_{n-1}f_{n}f_{n-1}\\
 &=& 0100 \cdot (1\cdot 0\cdot 1)(0\cdot 01\cdot 0)(01\cdot 010\cdot 01)(010\cdot 01001 \cdot 010) \cdots \notag
\end{eqnarray}
\end{remark}

We now introduce a class of words that we call the {\it co-Fibonacci words}. Although this class has appeared previously in the literature \cite{BerCoFib}, to the best of our knowledge no one has yet given a name to them.

The co-Fibonacci words $f'_{n}$ are defined by complementing the last two letters in the Fibonacci words $f_{n}$, that is, $f'_{n}=p_{n}yx$, where $x$ and $y$ are the letters such that $f_{n}=p_{n}xy$. Equivalently, co-Fibonacci words can be defined by $f'_{n}=f_{n-2}f_{n-1}$ for every $n\geq 3$. The first few co-Fibonacci words are displayed in Table \ref{tab:Cofibowords}.

\begin{table}[ht]
\centering  
\begin{equation*}
\begin{split}
 f'_{3} &= 10\\
 f'_{4} &= 001\\
 f'_{5} &= 01010\\
 f'_{6} &= 01001001\\
 f'_{7} &= 0100101001010\\
 f'_{8} &= 010010100100101001010\\
 \end{split}
 \end{equation*}
\caption{\label{tab:Cofibowords}The first few co-Fibonacci words.}
\end{table}

The following lemma is a direct consequence of Lemma \ref{lem:central}.

\begin{lemma}\label{lem:cofib}
 For every $n\geq 2$ one has
\[
 f'_{2n+1}=f_{2n}f'_{2n-1}, \hspace{8mm}  f_{2n+2}=f_{2n}f'_{2n+1}.
\]
\end{lemma}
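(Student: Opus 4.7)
The plan is to obtain both identities as direct re-groupings of Lemma \ref{lem:central}, once we express the Fibonacci and co-Fibonacci words in terms of the central prefix $p_n$. From $(\ref{central})$ I get $f_{2k+1}=p_{2k+1}\,01$ and $f_{2k+2}=p_{2k+2}\,10$, and by the very definition of the co-Fibonacci words (swap the last two letters), $f'_{2k+1}=p_{2k+1}\,10$ and $f'_{2k+2}=p_{2k+2}\,01$. So every Fibonacci or co-Fibonacci word factors as a central word followed by one of the two two-letter suffixes $01$ or $10$, and each identity we must prove reduces to a decomposition of a central word.

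For the first equality, I take the representation $p_{2n+1}=p_{2n}\,10\,p_{2n-1}$ furnished by Lemma \ref{lem:central} (the second of the two decompositions of $p_{2n+1}$), append the suffix $10$ coming from $f'_{2n+1}=p_{2n+1}\,10$, and re-group:
\[
f'_{2n+1}=p_{2n+1}\,10=\bigl(p_{2n}\,10\bigr)\bigl(p_{2n-1}\,10\bigr)=f_{2n}\,f'_{2n-1}.
\]
For the second equality I do exactly the same thing one index higher, using the decomposition $p_{2n+2}=p_{2n}\,10\,p_{2n+1}$ from Lemma \ref{lem:central} and the suffix $10$ of $f_{2n+2}$:
\[
f_{2n+2}=p_{2n+2}\,10=\bigl(p_{2n}\,10\bigr)\bigl(p_{2n+1}\,10\bigr)=f_{2n}\,f'_{2n+1}.
\]

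There is essentially no obstacle beyond keeping parities straight: the whole content is a book-keeping exercise, and once the Fibonacci and co-Fibonacci words are written in the form ``central word $+$ two-letter tail'', the two identities are a literal parsing of the two halves of Lemma \ref{lem:central}. The only thing to watch is choosing the correct one of the two decompositions of $p_{2n+1}$ and $p_{2n+2}$ (in each case the one starting with $p_{2n}\,10$, so that the leading factor $p_{2n}\,10$ becomes $f_{2n}$); the complementary decompositions $p_{2n-1}\,01\,p_{2n}$ and $p_{2n+1}\,01\,p_{2n}$ would not immediately give a factorization into (co-)Fibonacci words.
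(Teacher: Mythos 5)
Your proof is correct and is exactly the argument the paper intends: the paper dismisses this lemma as ``a direct consequence of Lemma \ref{lem:central},'' and your re-grouping of $p_{2n+1}=p_{2n}\,10\,p_{2n-1}$ and $p_{2n+2}=p_{2n}\,10\,p_{2n+1}$ with the two-letter tails from $(\ref{central})$ is the omitted computation. (You also silently repaired the paper's typo in Lemma \ref{lem:central}, where $p_{2n+1}=p_{2n}10p_{2n+1}$ should read $p_{2n+1}=p_{2n}10p_{2n-1}$.)
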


\begin{proposition}
The Fibonacci word can be obtained by concatenating $0$ and the odd co-Fibonacci words:
\begin{eqnarray}
 f &=& 0 \prod_{n\geq 1}f'_{2n+1}\\
 &=& 0 \cdot 10 \cdot 01010 \cdot 0100101001010 \cdots \notag
\end{eqnarray}
\end{proposition}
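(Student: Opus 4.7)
My plan is to reduce the claim to the first proposition of the section via the alternative description of the co-Fibonacci words, namely $f'_n = f_{n-2}f_{n-1}$. Specializing to odd indices, this says
\[
 f'_{2n+1} = f_{2n-1}f_{2n} \qquad (n\geq 1).
\]
Hence, concatenating the odd co-Fibonacci words pairs up every two consecutive Fibonacci words:
\[
 \prod_{n\geq 1} f'_{2n+1} \;=\; (f_1 f_2)(f_3 f_4)(f_5 f_6)\cdots \;=\; \prod_{n\geq 1} f_n.
\]
Prefixing a $0$ and invoking the previous proposition, $f = 0\prod_{n\geq 1} f_n$, immediately yields the desired factorization.

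The only nontrivial ingredient is the identity $f'_{2n+1} = f_{2n-1}f_{2n}$, but this is exactly the equivalent definition stated when the co-Fibonacci words were introduced, and it is a one-line consequence of (\ref{eq:rec}) and (\ref{central}): writing $f_n = f_{n-1}f_{n-2} = p_n xy$ and $f_{n-2}f_{n-1} = p_n yx$, one sees that these two strings coincide except on their last two letters, so $f_{n-2}f_{n-1}$ matches the definition of $f'_n$.

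I do not anticipate any real obstacle here; the factorization is essentially a bookkeeping reindexing of the first proposition, and the combinatorics has already been packaged into the characterization $f'_n = f_{n-2}f_{n-1}$. The length check $|f'_{2n+1}| = F_{2n-1} + F_{2n} = F_{2n+1}$ is automatic, so no position argument (in the style of the proof of the first proposition) is needed.
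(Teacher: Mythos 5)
Your proposal is correct and follows exactly the paper's argument: the paper's proof is the one-liner ``Follows directly from (\ref{fibo}) replacing $f_{2n-1}f_{2n}$ with $f'_{2n+1}$,'' which is precisely your regrouping $\prod_{n\ge 1} f_n = (f_1f_2)(f_3f_4)\cdots$. Your extra remark justifying the identity $f'_n = f_{n-2}f_{n-1}$ is harmless added detail, since the paper takes that as the equivalent definition of the co-Fibonacci words.
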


\begin{proof}
Follows directly from (\ref{fibo}) replacing $f_{2n-1}f_{2n}$ with $f'_{2n+1}$.
\end{proof}

Analogously, we have the following:

\begin{proposition}
The Fibonacci word can be obtained by concatenating $01$ and the even co-Fibonacci words:
\begin{eqnarray}
  f &=& 01 \prod_{n\geq 1}f'_{2n+2}\\
 &=& 01 \cdot 001 \cdot 01001001 \cdot 010010100100101001010 \cdots \notag
\end{eqnarray}
\end{proposition}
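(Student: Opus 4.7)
The plan is to start from the already-established identity $f = 0 \prod_{n\geq 1} f_n$ of equation~(\ref{fibo}) and simply regroup the factors on the right-hand side in a different way than in the previous proposition. Writing out the infinite product as $f_1 \cdot f_2 \cdot f_3 \cdot f_4 \cdot f_5 \cdots$, I would first peel off $f_1$ and concatenate it with the leading $0$. Since $f_1 = 1$, this produces the prefix $0 \cdot 1 = 01$, matching the initial block of the claimed factorization.

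Next I would group the remaining factors into consecutive pairs $(f_2,f_3), (f_4,f_5), (f_6,f_7), \ldots$, i.e.\ pairs of the form $(f_{2n}, f_{2n+1})$ for $n \geq 1$. By the very definition of the co-Fibonacci words, $f'_m = f_{m-2} f_{m-1}$ for every $m \geq 3$; taking $m = 2n+2$ gives $f_{2n} f_{2n+1} = f'_{2n+2}$ for every $n \geq 1$. Substituting these pairings into the regrouped product immediately yields
\[
f \;=\; 0 \cdot f_1 \cdot \prod_{n\geq 1} f_{2n} f_{2n+1} \;=\; 01 \cdot \prod_{n\geq 1} f'_{2n+2},
\]
which is the desired factorization.

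There is no real obstacle here: the argument parallels the proof of the preceding proposition, which grouped the pairs $(f_{2n-1}, f_{2n})$ into the odd co-Fibonacci words, whereas now we shift the parity of the pairing by one position and absorb the leftover $f_1$ into the $0$ prefix. Notably, Lemma~\ref{lem:cofib} is not even needed for this step; only the definition of $f'_n$ is used. The sole piece of bookkeeping is checking that $0 \cdot f_1 = 01$, which is immediate from $f_1 = 1$.
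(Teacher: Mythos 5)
Your proof is correct and is essentially the paper's own argument: the paper likewise proves this by taking $f = 0\prod_{n\geq 1} f_n$ from (\ref{fibo}) and replacing each pair $f_{2n}f_{2n+1}$ with $f'_{2n+2}$ via the definition $f'_m = f_{m-2}f_{m-1}$, with the leftover $0\cdot f_1 = 01$ forming the prefix. You just spell out the regrouping slightly more explicitly.
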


\begin{proof}
Follows directly from (\ref{fibo}) replacing $f_{2n}f_{2n+1}$ with $f'_{2n+2}$.
\end{proof}

\section{Singular words}

Let us define the \emph{left rotation} of a non-empty word $w=w_1w_2\cdots w_n$, $w_i$ letters, as the word $w^{\lambda}=w_nw_1\cdots w_{n-1}$. Analogously, the \emph{right rotation} of $w$ is defined as the word $w^{\rho}=w_2\cdots w_{n}w_1$.

The \emph{singular words} $\hat{f}_{n}$ are defined by complementing the first letter in the left rotations of the Fibonacci words $f_{n}$. The first few singular words are displayed in Table \ref{tab:Singular}. Note that for every $n\geq 1$, one has $\hat{f}_{2n+1}=0p_{2n+1}0$ and $\hat{f}_{2n+2}=1p_{2n+2}1$ .

\begin{table}[ht]
\centering  
\begin{equation*}
\begin{split}
 \hat{f}_{1} &= 0\\
 \hat{f}_{2} &= 1\\
 \hat{f}_{3} &= 00\\
 \hat{f}_{4} &= 101\\
 \hat{f}_{5} &= 00100\\
 \hat{f}_{6} &= 10100101\\
 \end{split}
 \end{equation*}
\caption{\label{tab:Singular}The first few singular words.}
\end{table}

The singular words are palindromic factors of $f$ but do not appear as prefixes of $f$ (by the way, $f$ also contains  other palindromic factors besides the central words $p_{n}$ and the singular words $\hat{f}_{n}$, e.g., $1001$ or $01010$, see \cite{DelDel06a} for more details). Their name comes from the fact that  among the $F_{n}+1$ factors of $f$ of length $F_{n}$, there are $F_{n}$ of them that can be obtained one from each other by iteratively applying left (or equivalently right) rotation and one, the singular word, whose left (or equivalently right) rotation is not a factor of $f$.

Wen and Wen \cite{WenWen94} proved that the Fibonacci infinite word can be obtained by concatenating the singular words:

\begin{proposition}
The Fibonacci infinite word is the concatenation of the singular words:
\begin{eqnarray}\label{singular}
 f &=& \prod_{n\geq 1}\hat{f}_{n}\\
 &=& 0 \cdot 1 \cdot 00 \cdot 101 \cdot 00100 \cdot 10100101 \cdots \notag
\end{eqnarray}
\end{proposition}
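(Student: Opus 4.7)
The plan is to prove by induction on $n$ that the partial product $S_n := \hat{f}_1 \hat{f}_2 \cdots \hat{f}_n$ is a prefix of $f$, and then let $n \to \infty$. Since $|\hat{f}_i| = F_i$, identity (\ref{eq:Fib}) gives $|S_n| = F_{n+2} - 1$, which tends to infinity, so a prefix statement at every finite $n$ yields the full factorization.

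A naive induction is awkward because a single $\hat{f}_n$ looks like no Fibonacci-flavoured word on its own. I will therefore strengthen the hypothesis to an exact description in terms of central words: for every $k \geq 1$,
\[
S_{2k+1} = p_{2k+3}\cdot 0, \qquad S_{2k} = p_{2k+2}\cdot 1.
\]
By (\ref{central}) the right-hand sides are $f_{2k+3}$ and $f_{2k+2}$ with the last letter removed, hence prefixes of $f$ of the correct length. The base cases $S_1 = 0 = p_3 \cdot 0$ (recall $p_3 = \epsilon$) and $S_2 = 01 = p_4 \cdot 1$ are immediate from the table.

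The two halves of the inductive step are symmetric. Using $\hat{f}_{2k+1} = 0\,p_{2k+1}\,0$,
\[
S_{2k+1} = S_{2k}\,\hat{f}_{2k+1} = (p_{2k+2}\cdot 1)(0\,p_{2k+1}\,0) = (p_{2k+2}\,10\,p_{2k+1})\cdot 0,
\]
and Lemma \ref{lem:central} collapses the parenthesis to $p_{2k+3}$; the step from $S_{2k+1}$ to $S_{2k+2}$ proceeds identically via $\hat{f}_{2k+2} = 1\,p_{2k+2}\,1$ together with the identity $p_{2k+4} = p_{2k+3}\,01\,p_{2k+2}$ from the same lemma. I expect no genuine obstacle here: the whole argument is bookkeeping with the central-word identities, and the only thing to watch is that the patterns $01$ and $10$ swap roles with the parity of the index, which is exactly why splitting the invariant into an odd and an even case is the right move.
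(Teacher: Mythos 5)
Your proof is correct, but it follows a genuinely different route from the paper's. The paper deduces (\ref{singular}) in one line from the earlier factorization (\ref{fibo}), $f=0\prod_{n\geq 1}f_n$: since $\hat{f}_n$ is exactly $f_n$ with its last letter deleted and the complement of that letter prepended, and since the last letters of $f_1,f_2,f_3,\dots$ alternate between $1$ and $0$, the product $0\cdot f_1f_2f_3\cdots$ re-brackets, one letter at a time, into $\hat{f}_1\hat{f}_2\hat{f}_3\cdots$. You instead run a self-contained induction giving the closed forms $\hat{f}_1\cdots\hat{f}_{2k}=p_{2k+2}\cdot 1$ and $\hat{f}_1\cdots\hat{f}_{2k+1}=p_{2k+3}\cdot 0$, i.e., each partial product $S_n$ is $f_{n+2}$ with its last letter removed, hence a prefix of $f$ of length $F_{n+2}-1$. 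Your route costs the central-word identities of Lemma \ref{lem:central} together with the descriptions $\hat{f}_{2n+1}=0p_{2n+1}0$ and $\hat{f}_{2n+2}=1p_{2n+2}1$, but it buys independence from (\ref{fibo}) and a sharper conclusion: an exact formula for every partial product rather than only the limit. Two small points of bookkeeping. First, state the invariant for $k\geq 0$, since your odd base case $S_1=p_3\cdot 0$ is the case $k=0$. Second, the identity you invoke in the odd step, $p_{2k+3}=p_{2k+2}\,10\,p_{2k+1}$, is the first identity of Lemma \ref{lem:central} read in the form $p_{2n+1}=p_{2n}\,10\,p_{2n-1}$ (a length count shows the right-hand side $p_{2n}10p_{2n+1}$ printed in the lemma should be $p_{2n}10p_{2n-1}$); equivalently, it follows from the correct identity $p_{2n+1}=p_{2n-1}01p_{2n}$ by reversal, since central words are palindromes. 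Neither point affects the validity of your argument.
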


\begin{proof}
Indeed, (\ref{singular}) follows directly from (\ref{fibo}) and the definition of singular words, observing that the Fibonacci words end by letter  $0$ and $1$ alternatingly. 
\end{proof}

The factorization (\ref{singular}) is in fact the Lempel-Ziv factorization of $f$. The Lempel-Ziv factorization is a factorization widely used in computer science for compressing strings \cite{LZ77}. The Lempel-Ziv factorization  of a word $w$ is $w=w_{1}w_{2}\cdots$ where $w_1$ is the first letter of $w$  and for every $i\geq 2$, $w_i$ is the shortest prefix of $w_iw_{i+1}\cdots$ that occurs only once in the word $w_1w_2\cdots w_i$.
 Roughly speaking, at each step one searches for the shortest factor that did not appear before.

\begin{remark}
It is easy to see, using for example (\ref{eq:rec}), (\ref{central}) and the definition of singular words,  that for every $n\geq 4$, $\hat{f}_{n}=\hat{f}_{n-2}\hat{f}_{n-3}\hat{f}_{n-2}$.
Therefore, from (\ref{singular}), we have
\begin{eqnarray}\label{multisingular3}
 f &=& 0100 \prod_{n\geq 2}\hat{f}_{n}\hat{f}_{n-1}\hat{f}_{n}\\
 &=& 0100 \cdot (1\cdot 0\cdot 1)(00\cdot 1\cdot 00)(101\cdot 00\cdot 101)(00100\cdot 101 \cdot 00100)  \cdots \notag
\end{eqnarray}
Since $0=\hat{f}_1$, we hence obtain
\begin{eqnarray}\label{multisingular4}
f &=& 010 \prod_{n\geq 2}\hat{f}_{n-1}\hat{f}_{n}\hat{f}_{n-1}\\
 &=& 010 \cdot (0\cdot 1\cdot 0)(1\cdot 00\cdot 1)(00\cdot 101\cdot 00)(101\cdot 00100 \cdot 101) \cdots \notag
\end{eqnarray}

The factorization (\ref{multisingular4}) is a sort of dual with Lucas numbers of the factorization in singular words (\ref{singular}). Indeed, the sequence of factor lengths in (\ref{singular}) is the sequence of Fibonacci numbers, while if in (\ref{multisingular4}) one decomposes the first term as  $01\cdot 0$, then the sequence of factor lengths is the sequence of Lucas numbers (sequence A000032 in
the {\it On-Line Encyclopedia of Integer Sequences}): 2,1,3,4,7,11, etc. 
\end{remark}

\section{Christoffel words}\label{sec:Lyndon}

The \emph{lower Christoffel words} are defined by $c_{n}=0p_{n}1$, for every $n\geq 3$. The lower Christoffel words are the Lyndon factors of $f$, i.e., they are lexicographically smaller than any of their proper suffixes (with respect to the order induced by $0<1$). 

\begin{table}[ht]
\centering  
\begin{equation*}
\begin{split}
 c_{3} &= 01\\
 c_{4} &= 001\\
 c_{5} &= 00101\\
 c_{6} &= 00100101\\
 c_{7} &= 0010010100101\\
 c_{8} &= 001001010010010100101\\
 \end{split}
 \end{equation*}
\caption{\label{tab:Lower}The first few lower Christoffel words.}
\end{table}

If in the Euclidean plane one interprets each $0$ by a horizontal unitary step and each $1$ with a vertical unitary step, the lower Christoffel word $c_n$ is the best grid approximation from below of the segment joining the point $(0,0)$ to the point $(F_{n-1},F_{n-2})$ (see Figure \ref{fig:Chris}). 

\begin{figure}
\begin{center}
\begin{minipage}{7.2cm}
\includegraphics[height=45mm]{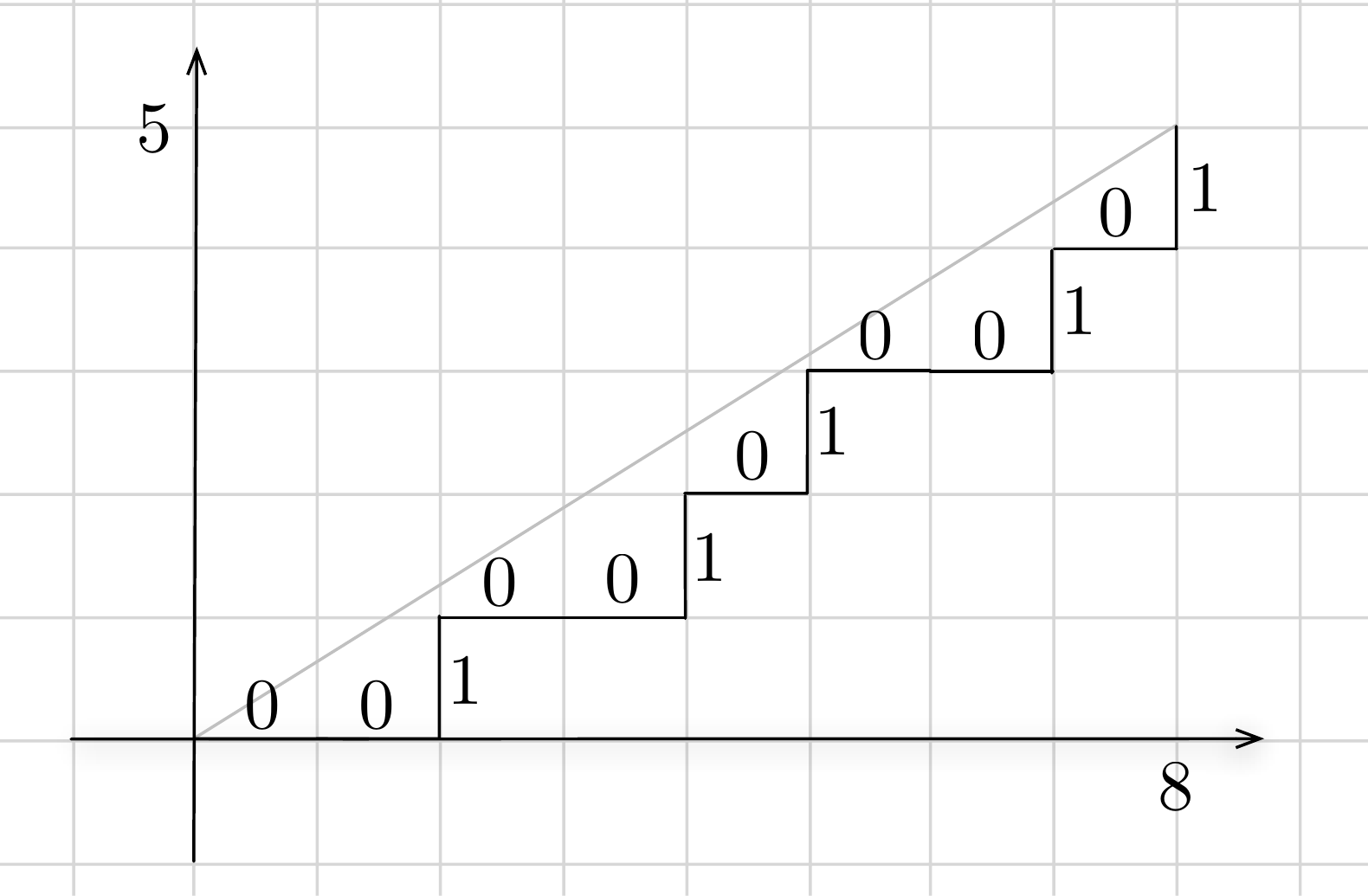}
\end{minipage}
\begin{minipage}{7.2cm}
\includegraphics[height=45mm]{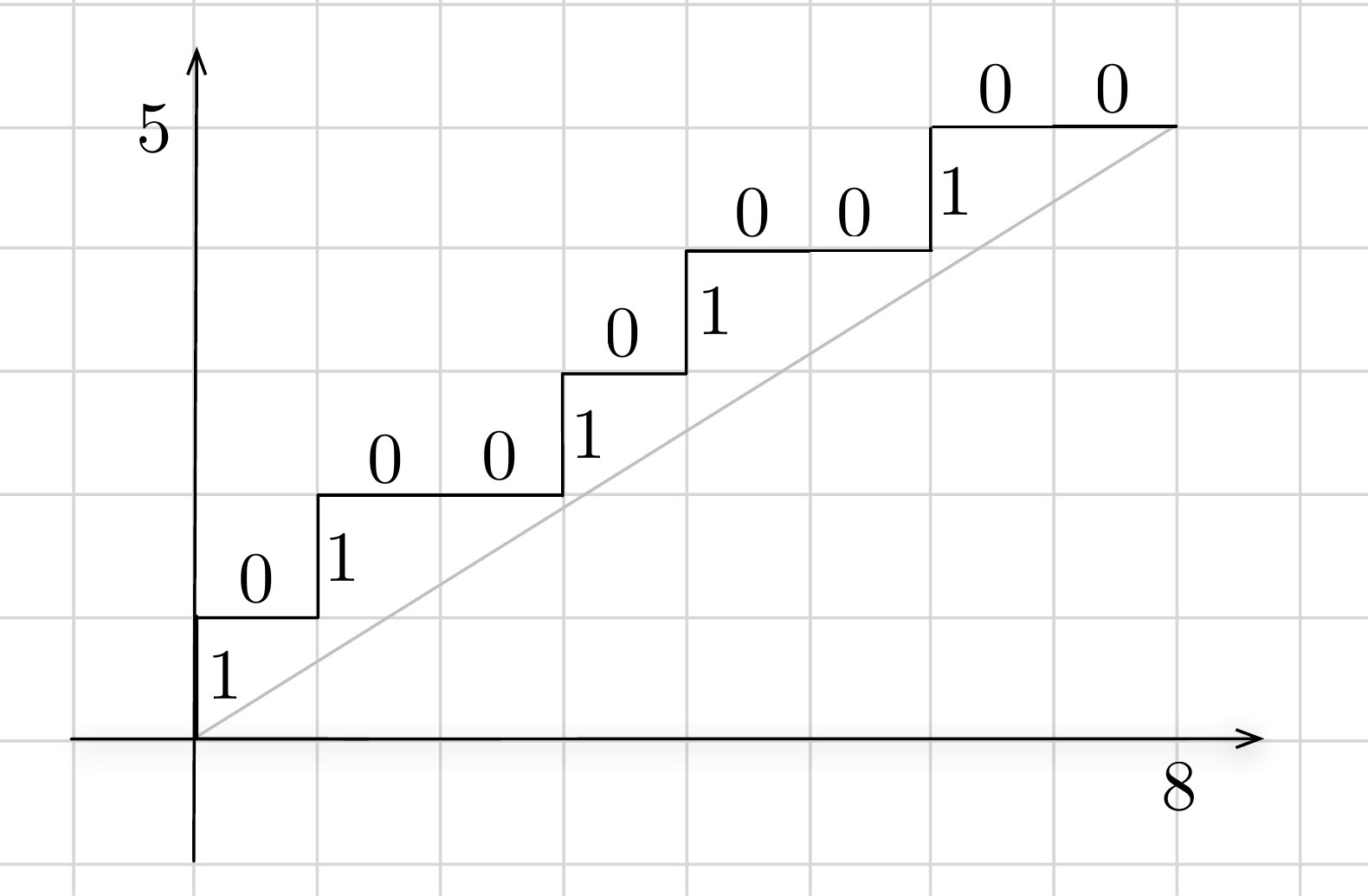}
\end{minipage}
\end{center}
\caption{The lower Christoffel word $c_7=0010010100101$ (left) and the upper Christoffel word $\wt{c_{7}}=1010010100100$ (right) are the best grid approximations, respectively from above and from below, of the Euclidean segment joining the points $(0,0)$ and $(8,5)=(F_6,F_5)$.\label{fig:Chris}}
\end{figure}

Analogously, the \emph{upper Christoffel words} are defined by $\wt{c_{n}}=1p_{n}0$, for every $n\geq 3$. Therefore, the upper Christoffel words are the reversals of the lower Christoffel words (we use the notation $\wt{w}$ for the reversal, a.k.a.~mirror image, of the word $w$).
The upper Christoffel words are the anti-Lyndon factors of $f$, i.e., they are lexicographically greater than any of their proper suffixes (with respect to the order induced by $0<1$). Moreover, $\wt{c_{n}}$ is the best grid approximation from above of the segment joining the point $(0,0)$ to the point $(F_{n-1},F_{n-2})$.

\begin{remark}
 For every $n\geq 3$, $c_n=f^{\lambda}_n$ if $n$ is even, $c_n=\wt{f^{\lambda}_n}$ if $n$ is odd. Therefore, $\wt{c_n}=\wt{f^{\lambda}_n}$ if $n$ is even, $\wt{c_n}=f^{\lambda}_n$ if $n$ is odd.
\end{remark}

\begin{table}[ht]
\centering  
\begin{equation*}
\begin{split}
 \wt{c_{3}} &= 10\\
 \wt{c_{4}} &= 100\\
 \wt{c_{5}} &= 10100\\
 \wt{c_{6}} &= 10100100\\
 \wt{c_{7}} &= 1010010100100\\
 \wt{c_{8}} &= 101001010010010100100\\
 \end{split}
 \end{equation*}
  \vspace{4mm}
\caption{\label{tab:Upper}The first few upper Christoffel words.}
\end{table}

\begin{lemma}\label{lem:Berstel}
For every $n\geq 2$ one has
\[
 c_{2n+1}=c_{2n}c_{2n-1}, \hspace{8mm}  c_{2n+2}=c_{2n}c_{2n+1},
\]
 and therefore
\[
 \wt{c_{2n+1}}=\wt{c_{2n-1}}\wt{c_{2n}}, \hspace{8mm} \wt{c_{2n+2}}=\wt{c_{2n+1}}\wt{c_{2n}}.
\]
\end{lemma}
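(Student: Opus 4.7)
The plan is to reduce both product identities directly to Lemma \ref{lem:central}: the two Christoffel identities are nothing more than the central-word identities of that lemma flanked by a leading $0$ and a trailing $1$. Concretely, starting from the definition $c_n = 0 p_n 1$, I would expand
\[
c_{2n} c_{2n-1} = (0 p_{2n} 1)(0 p_{2n-1} 1) = 0 \, (p_{2n}\, 10 \, p_{2n-1}) \, 1
\]
and invoke Lemma \ref{lem:central} to identify the bracketed word with $p_{2n+1}$, so that the whole thing equals $0 p_{2n+1} 1 = c_{2n+1}$. The even case is completely parallel: expanding $c_{2n} c_{2n+1} = 0 (p_{2n}\, 10\, p_{2n+1}) 1$ and using the decomposition $p_{2n+2} = p_{2n} 10 p_{2n+1}$ from Lemma \ref{lem:central} gives $c_{2n+2}$.

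For the reversal identities I would then apply the elementary rule $\wt{uv} = \wt{v}\wt{u}$ to the two product formulas just established. Reversing $c_{2n+1} = c_{2n} c_{2n-1}$ yields $\wt{c_{2n+1}} = \wt{c_{2n-1}}\wt{c_{2n}}$, and reversing $c_{2n+2} = c_{2n} c_{2n+1}$ yields $\wt{c_{2n+2}} = \wt{c_{2n+1}}\wt{c_{2n}}$. No separate argument is needed.

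There is no real obstacle here; the only care required is choosing, among the two equivalent forms of each central word supplied by Lemma \ref{lem:central}, the one whose ordering of the $p$-factors is compatible with placing the central letters $1$ and $0$ between the flanking boundary letters of the two consecutive Christoffel words.
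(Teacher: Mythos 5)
Your proof is correct, and it takes a slightly different (and arguably cleaner) route than the paper. The paper deduces the product identities from Lemma \ref{lem:cofib} (the co-Fibonacci recurrences $f'_{2n+1}=f_{2n}f'_{2n-1}$ and $f_{2n+2}=f_{2n}f'_{2n+1}$) by rotating both sides, using the fact that rotating $f'_{2n+1}=p_{2n+1}10$ and $f_{2n+2}=p_{2n+2}10$ produces $c_{2n+1}$ and $c_{2n+2}$; this requires the mildly delicate observation that rotating a concatenation re-aligns the factor boundaries so that the right-hand side becomes $c_{2n}c_{2n-1}$ (resp.\ $c_{2n}c_{2n+1}$). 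You instead bypass the co-Fibonacci words entirely and go straight to Lemma \ref{lem:central}, writing $c_{2n}c_{2n-1}=0(p_{2n}10p_{2n-1})1$ and $c_{2n}c_{2n+1}=0(p_{2n}10p_{2n+1})1$ and reading off $p_{2n+1}$ and $p_{2n+2}$ from the central-word decompositions. Since Lemma \ref{lem:cofib} is itself an immediate consequence of Lemma \ref{lem:central}, the two arguments rest on the same underlying identity; yours simply unfolds the chain by one step and avoids any rotation bookkeeping, which makes the boundary-letter accounting completely explicit. (Note that you are correctly using the identity $p_{2n+1}=p_{2n}10p_{2n-1}$; the statement of Lemma \ref{lem:central} contains an evident misprint, $p_{2n}10p_{2n+1}$, in that spot.) Your treatment of the reversal identities via $\wt{uv}=\wt{v}\,\wt{u}$ is exactly the paper's.
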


\begin{proof}
 The first part follows from Lemma \ref{lem:cofib} by applying the right rotation to each side of the equalities.
 The second part follows from the first by applying the reversal.
\end{proof}

The following result states that every Christoffel word is the product of two singular words.

\begin{lemma}\label{lem:Chris}
For every $n\geq 1$ one has
\begin{itemize}
 \item $c_{2n+1}=\hat{f}_{2n-1}\hat{f}_{2n}$ 
 \item $c_{2n+2}=\hat{f}_{2n+1}\hat{f}_{2n}$ 
\end{itemize}
and therefore
\begin{itemize}
\item $\wt{c_{2n+1}}=\hat{f}_{2n}\hat{f}_{2n-1}$
\item $\wt{c_{2n+2}}=\hat{f}_{2n}\hat{f}_{2n+1}$
\end{itemize}
\end{lemma}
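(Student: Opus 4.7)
The plan is to reduce everything directly to the definitions of $c_n$, $\hat{f}_n$, $p_n$ and to Lemma \ref{lem:central}; no induction or new calculation is really needed.

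First, I would unfold the definitions. By definition $c_{2n+1} = 0\,p_{2n+1}\,1$ and $c_{2n+2} = 0\,p_{2n+2}\,1$. For the singular words, the preamble to the lemma on singular words tells us that $\hat{f}_{2n+1} = 0\,p_{2n+1}\,0$ and $\hat{f}_{2n+2} = 1\,p_{2n+2}\,1$ (and in particular $\hat{f}_{2n} = 1\,p_{2n}\,1$). Concatenating, one obtains
\[
\hat{f}_{2n-1}\hat{f}_{2n} = 0\,p_{2n-1}\,0\,1\,p_{2n}\,1, \qquad \hat{f}_{2n+1}\hat{f}_{2n} = 0\,p_{2n+1}\,0\,1\,p_{2n}\,1.
\]
Stripping the outer $0$ and $1$ on each side, the desired equalities $c_{2n+1} = \hat{f}_{2n-1}\hat{f}_{2n}$ and $c_{2n+2} = \hat{f}_{2n+1}\hat{f}_{2n}$ are therefore equivalent to
\[
p_{2n+1} = p_{2n-1}\,0\,1\,p_{2n}, \qquad p_{2n+2} = p_{2n+1}\,0\,1\,p_{2n},
\]
which are exactly (one half of) each of the two identities in Lemma \ref{lem:central}. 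So the first two bullets are immediate.

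For the two bullets about the upper Christoffel words, I would just take the reversal of the identities just proved. Since central words are palindromes and $\hat{f}_{2n+1} = 0\,p_{2n+1}\,0$, $\hat{f}_{2n+2} = 1\,p_{2n+2}\,1$ are palindromic as well, we have $\wt{\hat{f}_k} = \hat{f}_k$ for all $k$. Reversing $c_{2n+1} = \hat{f}_{2n-1}\hat{f}_{2n}$ gives $\wt{c_{2n+1}} = \wt{\hat{f}_{2n}}\,\wt{\hat{f}_{2n-1}} = \hat{f}_{2n}\hat{f}_{2n-1}$, and similarly for the other case.

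There is no real obstacle here: the only non-trivial ingredient is Lemma \ref{lem:central}, which is already at our disposal. The small thing to be careful about is the parity bookkeeping in the singular-word definition (in particular using that $\hat{f}_{2n} = 1\,p_{2n}\,1$, not $0\,p_{2n}\,0$), because otherwise the outer letters would not match the outer $0$ and $1$ of a Christoffel word.
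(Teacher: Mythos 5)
Your proof is correct and is essentially the paper's own argument (the paper's proof just says the lemma ``follows directly from Lemma \ref{lem:central} and the definitions of Christoffel and singular words''), written out explicitly. The only pedantic caveat is that Lemma \ref{lem:central} is stated for $n\geq 2$ while the claim is for $n\geq 1$ (and $p_1$, $p_2$ are not defined), so the case $n=1$, i.e., $c_3=\hat{f}_1\hat{f}_2=0\cdot 1$ and $c_4=\hat{f}_3\hat{f}_2=00\cdot 1$, should be verified directly, which is immediate.
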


\begin{proof}
Follows directly from Lemma \ref{lem:central} and the definitions of Christoffel and singular words.
\end{proof}

Melan\c{c}on \cite{Me99} proved that the Fibonacci word is the concatenation of the odd lower Christoffel words: 

\begin{proposition}
The Fibonacci word is the concatenation of the odd lower Christoffel words:
\begin{eqnarray}\label{Chris}
 f &=&  \prod_{n\geq 1}c_{2n+1}\\
&=& 01 \cdot 00101 \cdot 0010010100101 \cdots\notag
\end{eqnarray}
\end{proposition}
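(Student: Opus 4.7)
The plan is to derive this factorization directly from the singular word factorization (\ref{singular}) together with Lemma \ref{lem:Chris}. Since both tools are already established in the excerpt, no new combinatorial work on Fibonacci numbers or Zeckendorf representations is needed.

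First I would recall that (\ref{singular}) gives
\[
f = \hat{f}_1 \hat{f}_2 \hat{f}_3 \hat{f}_4 \hat{f}_5 \hat{f}_6 \cdots = \prod_{n\geq 1}\hat{f}_n,
\]
an infinite concatenation of singular words indexed by all positive integers. The next step is simply to group these singular words in consecutive pairs, writing
\[
f = (\hat{f}_1 \hat{f}_2)(\hat{f}_3 \hat{f}_4)(\hat{f}_5 \hat{f}_6)\cdots = \prod_{n\geq 1}\hat{f}_{2n-1}\hat{f}_{2n}.
\]
Such a regrouping is legitimate because the concatenation is over a sequence of finite words, so associativity lets us pair adjacent factors without changing the resulting infinite word.

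Finally I would apply the first identity of Lemma \ref{lem:Chris}, namely $c_{2n+1} = \hat{f}_{2n-1}\hat{f}_{2n}$, which holds for every $n \geq 1$, to substitute the paired singular words with the corresponding odd lower Christoffel words:
\[
f = \prod_{n\geq 1}\hat{f}_{2n-1}\hat{f}_{2n} = \prod_{n\geq 1} c_{2n+1}.
\]

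There is really no serious obstacle here, since all the combinatorial heavy lifting has been done in Lemma \ref{lem:central}, Lemma \ref{lem:cofib}, Lemma \ref{lem:Chris}, and the proof of (\ref{singular}); the only thing to be careful about is the indexing, in particular checking that the first pair $\hat{f}_1\hat{f}_2$ really coincides with $c_3 = 0p_3 1 = 01$, which one verifies immediately from Tables \ref{tab:central} and \ref{tab:Singular}.
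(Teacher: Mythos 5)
Your proof is correct and follows exactly the paper's own route: the paper likewise derives (\ref{Chris}) directly from the singular word factorization (\ref{singular}) together with Lemma \ref{lem:Chris}, pairing consecutive singular words as $c_{2n+1}=\hat{f}_{2n-1}\hat{f}_{2n}$. Your indexing check that $\hat{f}_1\hat{f}_2 = 01 = c_3$ is the right detail to verify.
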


\begin{proof}
Follows directly from (\ref{singular}) and Lemma \ref{lem:Chris}.
\end{proof}

Actually, Melan\c{c}on proved that (\ref{Chris}) is precisely the Lyndon factorization of $f$. Recall that the Lyndon factorization of a word $w$ is $w=\ell_{1}\ell_{2}\cdots$, where each $\ell_{i}$ is a Lyndon word and is lexicographically greater than or equal to $\ell_{i+1}$. The uniqueness of such a factorization for finite words is a well-known theorem of Chen, Fox and Lyndon \cite{CFL58}. Siromoney et al.~extended this factorization to infinite words \cite{SiMaDaSu94}.

Symmetrically, we have the following:

\begin{proposition}
The Fibonacci word is the concatenation of $0$ and the even upper Christoffel words:
\begin{eqnarray}\label{antiChris}
 f &=& 0 \prod_{n\geq 2}\wt{c_{2n}}\\
 &=& 0 \cdot 100 \cdot 10100100  \cdot 101001010010010100100  \cdots \notag
\end{eqnarray}
\end{proposition}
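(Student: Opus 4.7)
The plan is to mimic the proof of the preceding proposition: start from the singular-word factorization (\ref{singular}) and group the factors in pairs, then apply the second half of Lemma \ref{lem:Chris} to rewrite each pair as an upper Christoffel word.

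Concretely, I would first observe that $\hat{f}_1 = 0$, so (\ref{singular}) may be rewritten as
\[
 f \;=\; \hat{f}_1 \cdot \prod_{n\geq 1}\hat{f}_{2n}\hat{f}_{2n+1} \;=\; 0 \cdot \prod_{n\geq 1}\hat{f}_{2n}\hat{f}_{2n+1},
\]
since the pairing of consecutive singular words into blocks starting from index $2$ partitions the index set $\{2,3,4,\ldots\}$ into the pairs $(2,3),(4,5),(6,7),\ldots$. Next, Lemma \ref{lem:Chris} gives $\wt{c_{2n+2}} = \hat{f}_{2n}\hat{f}_{2n+1}$ for every $n\geq 1$, so each block $\hat{f}_{2n}\hat{f}_{2n+1}$ coincides with $\wt{c_{2n+2}}$. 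Substituting and reindexing $m = n+1$ yields
\[
 f \;=\; 0 \cdot \prod_{n\geq 1}\wt{c_{2n+2}} \;=\; 0 \cdot \prod_{m\geq 2}\wt{c_{2m}},
\]
which is the claimed factorization.

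There is essentially no hard step here: the only thing to check is that the indices line up, namely that the pairing $(\hat{f}_2\hat{f}_3)(\hat{f}_4\hat{f}_5)\cdots$ left over after peeling off $\hat{f}_1$ matches exactly the indexing in Lemma \ref{lem:Chris}. Once this is verified, the identity is an immediate consequence of (\ref{singular}) and Lemma \ref{lem:Chris}, so the proof is just two lines. This is analogous to how the previous proposition is obtained from (\ref{fibo}) via the co-Fibonacci identity, and dual to how the odd-lower-Christoffel factorization (\ref{Chris}) is derived from (\ref{singular}).
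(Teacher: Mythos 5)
Your proof is correct and is exactly the argument the paper intends: peel off $\hat{f}_1=0$ from the singular-word factorization (\ref{singular}), pair the remaining factors as $(\hat{f}_{2n}\hat{f}_{2n+1})_{n\geq 1}$, and identify each pair with $\wt{c_{2n+2}}$ via Lemma \ref{lem:Chris}. The index bookkeeping you check is the only content of the step, and it is right.
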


\begin{proof}
 Follows directly from (\ref{singular}) and Lemma \ref{lem:Chris}.
\end{proof}

In fact, it is easy to see that (\ref{antiChris}) is the Lyndon factorization of $f$ if one takes the order induced by  $1<0$.

We now present two other factorizations based on Christoffel words. To the best of our knowledge, these factorizations did not appear before in literature.

\begin{proposition}
The Fibonacci word is the concatenation of $010$ and the lower Christoffel words where each odd lower Christoffel word is squared:
\begin{eqnarray}
 f &=&  010 \prod_{n\geq 1}c_{2n+1}^{2}c_{2n+2}\\
 &=& 010 \cdot (01 \cdot 01 \cdot 001) (00101 \cdot 00101 \cdot 00100101) \cdots \notag
\end{eqnarray}
\end{proposition}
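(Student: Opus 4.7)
The plan is to prove, by induction on $n \geq 1$, the strengthened identity
\begin{equation*}
A_n \;:=\; 010 \cdot \prod_{k=1}^n c_{2k+1}^{2}\, c_{2k+2} \;=\; c_3 c_5 \cdots c_{2n+1}\, c_{2n+3}\, c_{2n+2},
\end{equation*}
and then to observe that each $A_n$ is a prefix of $f$ of length tending to infinity.

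The engine of the induction is the auxiliary identity
\begin{equation*}
c_{2n+5} \;=\; c_{2n+2}\, c_{2n+3}^{2},
\end{equation*}
obtained by applying Lemma \ref{lem:Berstel} twice: $c_{2n+5} = c_{2n+4}\, c_{2n+3} = (c_{2n+2}\, c_{2n+3})\, c_{2n+3}$. For the base case $n=1$, I would check directly that $010\, c_3^{2} = c_3 c_5$ (both equal $0100101$), which immediately gives $A_1 = c_3 c_5 c_4$. For the inductive step, multiplying $A_n$ on the right by $c_{2n+3}^{2}\, c_{2n+4}$ and inserting the inductive hypothesis yields the string $c_3 \cdots c_{2n+1}\, c_{2n+3}\, c_{2n+2}\, c_{2n+3}^{2}\, c_{2n+4}$; the auxiliary identity collapses the middle block $c_{2n+2}\, c_{2n+3}^{2}$ into $c_{2n+5}$, giving exactly $A_{n+1}$.

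To conclude, I would show that $A_n$ is indeed a prefix of $f$. By (\ref{Chris}), $c_3 c_5 \cdots c_{2n+3}$ is an initial segment of $f$, and what follows in $f$ is $c_{2n+5}\, c_{2n+7}\cdots$. Since $c_{2n+5} = c_{2n+2}\, c_{2n+3}^{2}$ begins with $c_{2n+2}$, the extra trailing factor $c_{2n+2}$ in $A_n$ is compatible with the corresponding position in $f$. Because $|A_n|$ grows unboundedly (its length is $F_{2n+4} + F_{2n+2} - 1$), the infinite product must equal $f$.

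The main obstacle is spotting the correct form of the induction hypothesis. The partial products do not coincide with clean initial segments of the factorization (\ref{Chris}); instead, they carry a trailing $c_{2n+2}$, which is precisely what the next step absorbs into $c_{2n+5}$ through the auxiliary identity. Once this cancellation pattern is identified, the remainder of the proof is routine bookkeeping with Lemma \ref{lem:Berstel}.
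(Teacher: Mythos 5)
Your proof is correct, but it takes a genuinely different route from the paper's. The paper obtains the identity as a pure regrouping: starting from the factorization of $f$ into tripled singular words, $f=010\prod_{n\geq 2}\hat{f}_{n-1}\hat{f}_{n}\hat{f}_{n-1}$ (equation (\ref{multisingular4}); the proof text cites (\ref{multisingular2}), but the computation it displays is the singular-word one), it uses Lemma \ref{lem:Chris} to pair consecutive triples via
\[
(\hat{f}_{2n-1}\hat{f}_{2n}\hat{f}_{2n-1})(\hat{f}_{2n}\hat{f}_{2n+1}\hat{f}_{2n})=c_{2n+1}^{2}c_{2n+2},
\]
so equality with $f$ is immediate and no induction or limiting argument is required. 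You instead stay entirely inside the Christoffel-word framework: Lemma \ref{lem:Berstel} yields the recurrence $c_{2n+5}=c_{2n+2}c_{2n+3}^{2}$, and your induction with the trailing ``carry'' $c_{2n+2}$, checked against the Lyndon factorization (\ref{Chris}), is sound --- the base case $010\,c_{3}^{2}=c_{3}c_{5}$ and the absorption step both verify, and the concluding prefix-plus-unbounded-length argument is exactly what is needed to pass from the finite identities to the infinite product. What the paper's route buys is brevity and the fact that a regrouping of an infinite product manifestly leaves the underlying word unchanged; what your route buys is independence from the singular words and an explicit recurrence among Christoffel words that makes the squaring of the odd-indexed factors transparent.
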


\begin{proof}
Follows directly from (\ref{multisingular2}) and Lemma \ref{lem:Chris}. Indeed, by Lemma \ref{lem:Chris}, we have $$c_{2n+1}c_{2n+1}c_{2n+2}=\hat{f}_{2n-1}\hat{f}_{2n}\cdot \hat{f}_{2n-1}\hat{f}_{2n}\cdot \hat{f}_{2n+1}\hat{f}_{2n}=(\hat{f}_{2n-1}\hat{f}_{2n} \hat{f}_{2n-1})(\hat{f}_{2n}\hat{f}_{2n+1}\hat{f}_{2n}).$$
\end{proof}

Analogously, we have the following:

\begin{proposition}
The Fibonacci word is the concatenation of $0100$ and the upper Christoffel words  where each even upper Christoffel word is squared:
\begin{eqnarray}
 f &=&  0100 \prod_{n\geq 1}\wt{c_{2n+1}}\wt{c_{2n+2}}^{2}\\
 &=& 0100 \cdot (10 \cdot 100 \cdot 100) (10100 \cdot 10100100 \cdot 10100100)\cdots \notag
\end{eqnarray}
\end{proposition}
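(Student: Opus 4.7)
The statement is the upper Christoffel counterpart of the proposition immediately preceding it, so my plan is to mirror that argument, replacing the lower-Christoffel/Fibonacci-triple version with the upper-Christoffel/singular-triple version. Concretely, the previous proposition rewrote a pair of consecutive triples from (\ref{multisingular2}) as $c_{2n+1}^2 c_{2n+2}$ via Lemma \ref{lem:Chris}; here I would do the same with (\ref{multisingular3}).

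The first step is to recall (\ref{multisingular3}):
\[
f = 0100 \prod_{m\geq 2} \hat{f}_m \hat{f}_{m-1} \hat{f}_m,
\]
and to group consecutive factors pairwise, combining the term for $m = 2n$ with the term for $m = 2n+1$ for each $n \geq 1$. This gives
\[
f = 0100 \prod_{n\geq 1} \bigl(\hat{f}_{2n}\hat{f}_{2n-1}\hat{f}_{2n}\bigr)\bigl(\hat{f}_{2n+1}\hat{f}_{2n}\hat{f}_{2n+1}\bigr).
\]
The second step is then a direct computation: using the two upper-Christoffel identities from Lemma \ref{lem:Chris}, namely $\wt{c_{2n+1}} = \hat{f}_{2n}\hat{f}_{2n-1}$ and $\wt{c_{2n+2}} = \hat{f}_{2n}\hat{f}_{2n+1}$, one has
\[
\wt{c_{2n+1}}\,\wt{c_{2n+2}}^{2} = \hat{f}_{2n}\hat{f}_{2n-1}\cdot \hat{f}_{2n}\hat{f}_{2n+1}\cdot \hat{f}_{2n}\hat{f}_{2n+1} = \bigl(\hat{f}_{2n}\hat{f}_{2n-1}\hat{f}_{2n}\bigr)\bigl(\hat{f}_{2n+1}\hat{f}_{2n}\hat{f}_{2n+1}\bigr),
\]
which is exactly the paired block above. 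Substituting yields the claimed factorization.

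There is no real obstacle: the whole proof reduces to a rebracketing together with two applications of Lemma \ref{lem:Chris}. The only subtle choice, and the thing one must notice in order to write the short proof at all, is the correct pairing of indices in (\ref{multisingular3}) — one has to start the grouping at $m=2$ (an even index) so that the $\wt{c_{2n+1}}$ factor comes from the left triple and the two $\wt{c_{2n+2}}$ factors split across the two triples in exactly the way dictated by Lemma \ref{lem:Chris}. Once this pairing is made, everything else is an identity between concatenations.
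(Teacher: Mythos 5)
Your proof is correct and essentially identical to the paper's: the paper likewise derives the result by pairing consecutive singular-word triples and applying Lemma \ref{lem:Chris} through the identity $\wt{c_{2n+1}}\wt{c_{2n+2}}^{2}=(\hat{f}_{2n}\hat{f}_{2n-1}\hat{f}_{2n})(\hat{f}_{2n+1}\hat{f}_{2n}\hat{f}_{2n+1})$. The only discrepancy is that the paper cites (\ref{multisingular}) where you cite (\ref{multisingular3}); since the paper's displayed computation is in singular words and matches the prefix $0100$, that citation appears to be a typo and your choice of starting factorization is the right one.
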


\begin{proof}
Follows directly from (\ref{multisingular}) and Lemma \ref{lem:Chris}. Indeed, by Lemma \ref{lem:Chris}, we have $$\wt{c_{2n+1}}\wt{c_{2n+2}}\wt{c_{2n+2}}=\hat{f}_{2n}\hat{f}_{2n-1}\cdot \hat{f}_{2n}\hat{f}_{2n+1}\cdot \hat{f}_{2n}\hat{f}_{2n+1}=(\hat{f}_{2n}\hat{f}_{2n-1} \hat{f}_{2n})(\hat{f}_{2n+1}\hat{f}_{2n}\hat{f}_{2n+1}).$$
\end{proof}

\section{Reversals of Fibonacci words}

One of the most known factorizations of the Fibonacci infinite word, and perhaps the most surprising, is the following.

\begin{proposition}
The Fibonacci word can be obtained also by concatenating the reversals of the Fibonacci words:
\begin{eqnarray}\label{rev}
 f &=& \prod_{n\geq 2}\wt{f_{n}}\\
 &=& 0 \cdot 10 \cdot 010 \cdot 10010 \cdot 01010010 \cdots \notag
\end{eqnarray}
\end{proposition}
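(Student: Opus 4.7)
The plan is to verify this factorization by bootstrapping from the known factorization $f = 0\prod_{n\geq 1}f_n$ of (\ref{fibo}). Set $P_n = \prod_{i=2}^{n}\wt{f_i}$ and $Q_n = 0\prod_{i=1}^{n}f_i$, so $Q_n$ is a prefix of $f$ of length $F_{n+2}$ by (\ref{fibo}), while a length count with (\ref{eq:Fib}) gives $|P_n| = F_{n+2}-2$. It therefore suffices to show, by induction on $n\geq 2$, that $Q_n = P_n\cdot \tau_n$, where $\tau_n$ denotes the last two letters of $f_n$; by (\ref{central}), $\tau_n = 01$ if $n$ is odd and $\tau_n = 10$ if $n$ is even, so in particular $\tau_{n-1} = \wt{\tau_n}$.

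The base case $n=2$ is the direct check $Q_2 = 0\cdot 1\cdot 0 = 0\cdot 10 = \wt{f_2}\cdot \tau_2 = P_2\tau_2$. The engine of the induction step is the following commutation identity, which follows immediately from (\ref{central}) and the fact that the central word $p_n$ is a palindrome: writing $f_n = p_n\tau_n$ gives $\wt{f_n} = \wt{\tau_n}\,p_n$, hence
\begin{equation*}
\wt{\tau_n}\cdot f_n \;=\; \wt{\tau_n}\cdot p_n\cdot \tau_n \;=\; \wt{f_n}\cdot \tau_n.
\end{equation*}

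With this in hand, the inductive step is a one-line computation:
\begin{equation*}
Q_n \;=\; Q_{n-1}\cdot f_n \;=\; P_{n-1}\cdot \tau_{n-1}\cdot f_n \;=\; P_{n-1}\cdot \wt{\tau_n}\cdot f_n \;=\; P_{n-1}\cdot \wt{f_n}\cdot \tau_n \;=\; P_n\cdot \tau_n,
\end{equation*}
using the induction hypothesis, then $\tau_{n-1}=\wt{\tau_n}$, then the commutation identity. Since $|P_n|=F_{n+2}-2\to\infty$ and each $P_n$ is a prefix of $f$, we conclude $\prod_{n\geq 2}\wt{f_n}=f$.

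The main obstacle is spotting the right invariant: the product $P_n$ does not coincide with any natural prefix given by (\ref{fibo}), but rather lags behind $Q_n$ by exactly two letters. Once one observes this length discrepancy and recalls that $f_n$ is obtained from a palindrome by appending two alternating letters, the palindromicity of $p_n$ furnishes the swap $\wt{\tau_n}f_n = \wt{f_n}\tau_n$ that drives the induction, and nothing else is needed.
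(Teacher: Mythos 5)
Your proof is correct, but it takes a genuinely different route from the paper. The paper obtains (\ref{rev}) as a corollary of the singular-word factorization (\ref{singular}): the singular word $\hat{f}_{n}$ is the right rotation of $\wt{f_{n}}$ with its last letter complemented, and since the first letters of the $\wt{f_{n}}$ alternate between $0$ and $1$, the two concatenations $\prod_{n\geq 1}\hat{f}_{n}$ and $\prod_{n\geq 2}\wt{f_{n}}$ coincide up to a one-letter offset that is absorbed by $\hat{f}_{1}=0$. You instead induct directly from the basic factorization (\ref{fibo}), tracking the invariant that the partial product $P_{n}=\prod_{i=2}^{n}\wt{f_{i}}$ lags the prefix $Q_{n}$ of $f$ by exactly the two-letter suffix $\tau_{n}$ of $f_{n}$, and the step is driven by the commutation $\wt{\tau_{n}}f_{n}=\wt{f_{n}}\tau_{n}$ coming from the palindromicity of the central word $p_{n}$. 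What your argument buys is self-containedness: it needs only (\ref{fibo}) and (\ref{central}) and bypasses the singular words entirely; what the paper's ordering buys is a one-line proof given machinery it wants anyway for the Lempel-Ziv and Christoffel factorizations. One cosmetic point: $\tau_{2}$ cannot literally be ``the last two letters of $f_{2}$'' since $|f_{2}|=1$; you should simply define $\tau_{n}$ by the parity rule ($01$ for $n$ odd, $10$ for $n$ even), note that it agrees with the last two letters of $f_{n}$ for $n\geq 3$ by (\ref{central}), and observe that the inductive step only invokes $f_{n}=p_{n}\tau_{n}$ for $n\geq 3$, so nothing breaks.
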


\begin{proof}
It follows from the definitions that taking the right rotation of $\wt{f_{n}}$ and complementing the last letter produces the $n$-th singular word $\hat{f}_{n}$. Therefore, (\ref{rev}) follows directly from (\ref{singular}) observing that the reversals of the Fibonacci words start with $0$ and $1$ alternatingly.
\end{proof}

The factorization (\ref{rev}) is basically the Crochemore factorization of $f$---the only difference is that the Crochemore factorization starts with $0$, $1$, $0$ and then coincides with the one above (see \cite{BeSa06}). Recall that the Crochemore factorization of $w$ is $w=c_{1}c_{2}\cdots$ where $c_{1}$ is the first letter of $w$ and for every $i>1$, $c_{i}$ is either a fresh letter or the longest prefix of $c_{i}c_{i+1}\cdots$ occurring twice in $f_{1}f_{2}\cdots f_{i}$. For example, the Crochemore factorization of the word $w = 0101001$ is $0\cdot 1\cdot 010\cdot 01$, since $010$ occurs twice in $01010$.  

\medskip

In 1995, de Luca \cite{del95} considered the following factorization:

\begin{proposition}
The Fibonacci word can be obtained by concatenating the reversals of the even Fibonacci words.
\begin{eqnarray}\label{del}
 f &=& \prod_{n\geq 2}\wt{f_{2n}}\\
 &=& 010 \cdot 01010010 \cdot 010100101001001010010 \cdots \notag
\end{eqnarray}
\end{proposition}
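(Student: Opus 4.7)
The plan is to derive this factorization directly from the factorization (\ref{rev}) established in the previous proposition, using only the basic recursion (\ref{eq:rec}) for Fibonacci words. The underlying observation is that grouping the terms of (\ref{rev}) in consecutive pairs starting from $\wt{f_2}\wt{f_3}$ collapses each pair into a single even-indexed reversal.

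First I would reverse the recursion (\ref{eq:rec}). Since $f_n = f_{n-1}f_{n-2}$ for every $n\geq 3$, applying the reversal operation (which reverses the order of concatenation) gives
\begin{equation*}
\wt{f_n} = \wt{f_{n-2}}\,\wt{f_{n-1}}.
\end{equation*}
Specializing to $n = 2k+2$ for $k \geq 1$ yields $\wt{f_{2k+2}} = \wt{f_{2k}}\,\wt{f_{2k+1}}$, which is exactly the identity needed to fuse two consecutive factors of (\ref{rev}) into one factor of (\ref{del}).

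Then I would rewrite (\ref{rev}) by pairing adjacent factors:
\begin{equation*}
f \;=\; \prod_{n\geq 2}\wt{f_n} \;=\; \prod_{k\geq 1}\bigl(\wt{f_{2k}}\,\wt{f_{2k+1}}\bigr) \;=\; \prod_{k\geq 1}\wt{f_{2k+2}} \;=\; \prod_{n\geq 2}\wt{f_{2n}},
\end{equation*}
after reindexing $n = k+1$. One can verify the base case by hand: $\wt{f_2}\wt{f_3} = 0 \cdot 10 = 010 = \wt{f_4}$.

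There is no real obstacle here; the only subtle point is making sure that the regrouping of an infinite product is justified. This is immediate because the equality $\wt{f_{2k}}\wt{f_{2k+1}} = \wt{f_{2k+2}}$ holds as a finite-word identity, so every finite prefix of the proposed right-hand side equals a finite prefix of (\ref{rev}), and hence coincides with the corresponding prefix of $f$.
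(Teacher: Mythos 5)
Your proof is correct and follows essentially the same route as the paper: both reverse the recursion $f_n=f_{n-1}f_{n-2}$ to get $\wt{f_n}=\wt{f_{n-2}}\wt{f_{n-1}}$ and then fuse consecutive pairs $\wt{f_{2k}}\wt{f_{2k+1}}$ in the factorization (\ref{rev}) into $\wt{f_{2k+2}}$. Your extra remark justifying the regrouping via finite prefixes is a welcome touch of rigor that the paper leaves implicit.
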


\begin{proof}
Applying the reversal to $(\ref{eq:rec})$, we have that $\wt{f_{n}}=\wt{f_{n-2}}\wt{f_{n-1}}$, for every $n>2$. So (\ref{del}) follows directly from (\ref{rev}) by replacing $\wt{f_{2n-2}}\wt{f_{2n-1}}$ with $\wt{f_{2n}}$.
\end{proof}

In \cite{del95} de Luca proved that the factorization (\ref{del}) has the following 
minimal property with respect to the lexicographical order: any non-trivial permutation of a finite number of the factors will produce an infinite word that is lexicographically greater than $f$.

\medskip

Concatenating the reversals of the odd Fibonacci words instead of even ones still produces the Fibonacci word, if one prepends a $0$:

\begin{proposition}
The Fibonacci word can be obtained by concatenating $0$ and the reversals of the odd Fibonacci words:
\begin{eqnarray}
 f &=& 0\prod_{n\geq 2}\wt{f_{2n+1}}\\
 &=& 0 \cdot 10010 \cdot 1001001010010 \cdots \notag
\end{eqnarray}
\end{proposition}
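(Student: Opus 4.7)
The plan is to deduce this identity as a one-line corollary of (\ref{rev}) by the same kind of pair-regrouping that the author uses for (\ref{del}), only shifted by one index. From (\ref{rev}) we already know that $f=\wt{f_{2}}\wt{f_{3}}\wt{f_{4}}\wt{f_{5}}\cdots$. Since $f_{2}=0$, the very first factor contributes exactly the leading $0$ in the claimed product, so the task reduces to showing that the remaining tail $\prod_{n\geq 3}\wt{f_{n}}$ regroups as $\prod_{n\geq 2}\wt{f_{2n+1}}$.

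The key fact I would invoke is the reverse of the recurrence (\ref{eq:rec}), namely $\wt{f_{n}}=\wt{f_{n-2}}\wt{f_{n-1}}$ for $n>2$, which the author has already established in the proof of (\ref{del}). Specializing to odd indices $n=2k+1$ with $k\geq 2$, this reads $\wt{f_{2k+1}}=\wt{f_{2k-1}}\wt{f_{2k}}$. Hence every pair of consecutive reversals starting from $(\wt{f_{3}},\wt{f_{4}})$ collapses into a single odd reversal: $\wt{f_{3}}\wt{f_{4}}=\wt{f_{5}}$, then $\wt{f_{5}}\wt{f_{6}}=\wt{f_{7}}$, and so on down the tail.

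Putting the two pieces together I would write
\[
 f=\wt{f_{2}}\cdot (\wt{f_{3}}\wt{f_{4}})\cdot (\wt{f_{5}}\wt{f_{6}})\cdot (\wt{f_{7}}\wt{f_{8}})\cdots = 0\cdot \wt{f_{5}}\cdot \wt{f_{7}}\cdot \wt{f_{9}}\cdots = 0\prod_{n\geq 2}\wt{f_{2n+1}},
\]
which is exactly the claim. The only subtlety worth double-checking is the index alignment: one must verify that the pairing on the right-hand side starts at $2n+1=5$ (i.e.\ $n=2$) and that every factor of the left-hand tail is consumed with none left over; both are immediate from $\wt{f_{2k+1}}=\wt{f_{2k-1}}\wt{f_{2k}}$. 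There is no substantive obstacle here, and I do not expect any step to require more than a line — the argument is essentially a cosmetic variant of the proof of (\ref{del}), with the pair boundaries shifted by one position so that the unmatched $\wt{f_{2}}=0$ is left out in front.
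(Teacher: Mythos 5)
Your proposal is correct and is essentially the paper's own proof: the paper also starts from (\ref{rev}), peels off $\wt{f_{2}}=0$, and regroups the tail by replacing each pair $\wt{f_{2n-1}}\wt{f_{2n}}$ with $\wt{f_{2n+1}}$ via the reversed recurrence. Your index check confirming that the pairing starts at $\wt{f_{3}}\wt{f_{4}}=\wt{f_{5}}$ is exactly the alignment the paper leaves implicit.
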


\begin{proof}
Follows directly from (\ref{rev}) by replacing $\wt{f_{2n-1}}\wt{f_{2n}}$  with $\wt{f_{2n+1}}$.
\end{proof}

Recently  \cite{DelFi13}, studying the so-called \emph{open} and \emph{closed} words, the following factorization has been proved:

\begin{proposition}
The Fibonacci word can be obtained by concatenating $01$ and the squares of the reversals of the Fibonacci words:
\begin{eqnarray}
 f &=& 01\prod_{n\geq 2}(\wt{f_{n}})^{2}\\
 &=& 01\cdot (0\cdot 0)(10 \cdot 10)(010 \cdot 010)(10010\cdot 10010)\cdots \notag
\end{eqnarray}
\end{proposition}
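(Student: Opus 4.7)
The plan is to reduce this factorization to the already-established (\ref{rev}) by a pure rebracketing of the infinite concatenation. Beyond (\ref{rev}), the only ingredient I would need is the reversed recurrence $\wt{f_n} = \wt{f_{n-2}}\wt{f_{n-1}}$ for $n \geq 3$, obtained by applying the reversal map to both sides of $(\ref{eq:rec})$.

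Starting from the right-hand side
\[
01 \prod_{n\geq 2}(\wt{f_n})^{2} \;=\; 01\cdot \wt{f_2}\wt{f_2}\cdot \wt{f_3}\wt{f_3}\cdot \wt{f_4}\wt{f_4}\cdot \wt{f_5}\wt{f_5}\cdots,
\]
I would detach the initial block $01\,\wt{f_2}$ and then pair each remaining $\wt{f_n}$ with the immediately following $\wt{f_{n+1}}$, obtaining
\[
(01\cdot\wt{f_2})\cdot(\wt{f_2}\wt{f_3})\cdot(\wt{f_3}\wt{f_4})\cdot(\wt{f_4}\wt{f_5})\cdot(\wt{f_5}\wt{f_6})\cdots.
\]
This is a legal manipulation, since we are only re-parenthesizing an associative concatenation without reordering any factors. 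By the reversed recurrence, every pair $\wt{f_n}\wt{f_{n+1}}$ with $n\geq 2$ collapses to $\wt{f_{n+2}}$, while the opening block simplifies to $01\cdot\wt{f_2}=01\cdot 0=010=\wt{f_2}\wt{f_3}$. Substituting these identities turns the product into $\wt{f_2}\wt{f_3}\cdot\wt{f_4}\cdot\wt{f_5}\cdot\wt{f_6}\cdots=\prod_{n\geq 2}\wt{f_n}$, which equals $f$ by (\ref{rev}).

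There is no real obstacle once the correct pairing is spotted; the proof is essentially a one-line regrouping, of the same flavour as the other derivations in this section. The only piece of care is checking that the leftover initial fragment $01\,\wt{f_2}$ reconstitutes exactly the first two blocks $\wt{f_2}\wt{f_3}$ of (\ref{rev}), so that the telescoped tail lines up with the known factorization without any offset.
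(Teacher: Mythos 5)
Your proof is correct and is essentially the paper's own argument read in the opposite direction: the paper starts from (\ref{rev}), expands each $\wt{f_{n}}$ as $\wt{f_{n-2}}\wt{f_{n-1}}$ and regroups into squares, whereas you start from the squares, regroup into consecutive pairs and collapse them back via the same reversed recurrence. Both rest on exactly the same two ingredients, (\ref{rev}) and $\wt{f_{n}}=\wt{f_{n-2}}\wt{f_{n-1}}$, so there is nothing substantive to add.
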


\begin{proof}
Recalling that for every $n\geq 3$, one has $\wt{f_{n}}=\wt{f_{n-2}}\wt{f_{n-1}}$, we have, from  (\ref{rev}), that 
$f= \wt{f_{2}}\wt{f_{3}}\wt{f_{4}}\cdots=0\cdot \wt{f_{1}}\wt{f_{2}}\wt{f_{2}}\wt{f_{3}}\cdots=01\cdot (\wt{f_{2}}\wt{f_{2}})(\wt{f_{3}}\wt{f_{3}})\cdots=01\prod_{n\geq  2}(\wt{f_{n}})^{2}$.
\end{proof}

\section{Generalization to standard Sturmian words}

The Fibonacci word is the most prominent example of a \emph{standard Sturmian word}. Let $\alpha$ be an irrational number such that $0<\alpha<1$, and let $\left[0;d_{1}+1,d_{2},d_3,\ldots\right]$ be the continued fraction expansion of $\alpha$.
The sequence of words defined by $s_{1}=1$, $s_{2}=0$ and $s_{n}=s_{n-1}^{d_{n-2}}s_{n-2}$ for $n\geq 3$, converges to the infinite word $w_{\alpha}$, called the standard Sturmian word of slope $\alpha$. The sequence of words $s_{n}$ is called the standard sequence of $w_{\alpha}$. The Fibonacci word is the standard Sturmian word of slope $\alpha=(3-\sqrt{5})/2$ and its standard sequence is the sequence of Fibonacci finite words, since one has $d_i=1$ for every $i\ge 1$.

Most of the factorizations we described in this note can be generalized to any standard Sturmian word. However, the proofs become more technical and less easy to present.

\end{document}